\documentclass[conference]{IEEEtran}
\IEEEoverridecommandlockouts
\usepackage{setspace}
\usepackage{cite}
\usepackage{amsmath,amssymb,amsfonts,mathtools,bm}
\usepackage{amsthm}
\usepackage{algorithm}
\usepackage{algpseudocode}
\usepackage{graphicx}
\usepackage{textcomp}
\usepackage{xcolor,float}
\usepackage{tabularx}
\usepackage{textcomp}
\usepackage{diagbox}
\usepackage{svg}
\usepackage{booktabs}
\usepackage{subfigure}
\usepackage{graphicx}
\usepackage{makecell}
\usepackage{amsthm}
\usepackage{caption}
\usepackage{sidecap}
\usepackage{microtype}
\usepackage{booktabs} 
\usepackage{multirow}
\usepackage{float}
\usepackage{adjustbox} 
\usepackage{amsmath}
\usepackage{amssymb}
\usepackage{colortbl}
\usepackage{makecell}
\usepackage{float}
\usepackage{url}
\usepackage{balance}
\usepackage{bbding}
\usepackage{graphicx}
\usepackage{sidecap}
\usepackage{microtype}
\usepackage{graphicx}
\usepackage{subfigure}
\usepackage{booktabs} 
\usepackage{multirow}
\usepackage{array}
\usepackage{subcaption}
\usepackage{graphicx}
\usepackage{makecell}
\usepackage{amsmath}
\usepackage{amssymb}
\newtheorem{proposition}{Proposition}
\newtheorem{remark}{Remark}
\usepackage{tikz}
\usepackage{graphicx}
\usepackage{caption}
\usepackage{subcaption}

\usepackage{changes} 

\usepackage{booktabs}
\usepackage{multirow}
\usepackage{adjustbox}
\usepackage{xcolor}

\definecolor{ForestGreen}{rgb}{0.13,0.55,0.13}

\definecolor{lime}{HTML}{A6CE39}
\DeclareRobustCommand{\orcidicon}{%
    \begin{tikzpicture}
    \draw[lime, fill=lime] (0,0) 
    circle [radius=0.16] 
    node[white] {{\fontfamily{qag}\selectfont \tiny ID}};    \draw[white, fill=white] (-0.0625,0.095) 
    circle [radius=0.007];    \end{tikzpicture}
    \hspace{-2mm}}
\foreach \x in {A, ..., Z}{%
    \expandafter\xdef\csname orcid\x\endcsname{\noexpand\href{https://orcid.org/\csname orcidauthor\x\endcsname}{\noexpand\orcidicon}}
    }

\allowdisplaybreaks
\renewcommand{\baselinestretch}{1}

\begin{document}

\title{Path-Level Radio Map-Aided Fast and Robust Channel Estimation for Pilot-Starved MIMO-OFDM Systems}
\author{\IEEEauthorblockN{Xiucheng Wang\IEEEauthorrefmark{1},
Nan Cheng\IEEEauthorrefmark{1},
Yiyan Zhang\IEEEauthorrefmark{2},
Ruijin Sun\IEEEauthorrefmark{1},
Haixia Peng\IEEEauthorrefmark{3}}
\IEEEauthorblockA{
\IEEEauthorrefmark{1}State Key Laboratory of ISN and School of Telecommunications Engineering, Xidian University, Xi'an, 710071, China\\
\IEEEauthorrefmark{2}School of Computer Science and Technology, Xi’an
Jiaotong University, Xi’an, China\\
\IEEEauthorrefmark{3}School of Information and Communication Engineering, Xi’an Jiaotong University, Xi’an, China\\
Email: xcwang\_1@stu.xidian.edu.cn,  dr.nan.cheng@ieee.org,  zhangyy414@stu.xjtu.edu.cn,\\ sunruijin@xidian.edu.cn, haixia.peng@xjtu.edu.cn}
}
    \maketitle

\IEEEdisplaynontitleabstractindextext

\IEEEpeerreviewmaketitle

\begin{abstract}
Accurate channel estimation in massive multiple-input multiple-output orthogonal frequency division multiplexing (MIMO-OFDM) systems is challenging when the number of pilot symbols is much smaller than the number of transmit antennas. Conventional compressed sensing methods perform a three-dimensional search over the angle-of-arrival, angle-of-departure, and delay domains, which incurs high computational cost. In this paper, we propose CHARM (channel estimation with angular-delay radio map), a framework that extracts an angular-delay power spectrum (ADPS) prior from path-level radio maps. The ADPS identifies the joint angle-of-arrival and delay support of the dominant multipath components offline, reducing the online estimation to a one-dimensional angle-of-departure search per path. A trust-region constraint is further introduced to prevent sub-grid refinement from diverging under dictionary mismatch. Simulation results show that CHARM achieves accuracy comparable to three-dimensional joint orthogonal matching pursuit (OMP) with $34.8\times$ speedup at pilot length $T \leq 4$, and that the trust-region variant degrades by only 3.7~dB under severe dictionary mismatch of 0.2~rad standard deviation, compared with 8.2~dB without the constraint.
\end{abstract}
 
\begin{IEEEkeywords}
Channel estimation, MIMO-OFDM, radio map, compressed sensing, ray tracing, angular-delay power spectrum.
\end{IEEEkeywords}

\section{Introduction}
\label{sec:intro}
 
Massive multiple-input multiple-output (MIMO) combined with orthogonal frequency division multiplexing (OFDM) is a key enabler for fifth-generation (5G) and beyond wireless systems \cite{larsson2014massive, lu2014overview}. Accurate channel state information (CSI) is essential for beamforming, resource allocation, and interference management in these systems. In practice, CSI is acquired by transmitting known pilot symbols and estimating the channel from the received signals. However, conventional least-squares estimation requires the pilot length $T$ to be at least $N_t$, which becomes prohibitive in massive MIMO deployments. When the number of pilot symbols $T$ is much smaller than $N_t$, the channel estimation problem is severely underdetermined. For instance, with $N_t = 64$ transmit antennas and $T = 4$ pilot symbols, the compression ratio reaches $16\times$. This pilot-starved regime is commonly encountered in high-mobility or latency-sensitive scenarios where the coherence time is limited. Reducing the pilot overhead while maintaining estimation accuracy is therefore a fundamental challenge for massive MIMO-OFDM systems.
 
Existing channel estimation methods for the pilot-starved regime can be broadly classified into three categories. The first category relies on statistical channel priors. The linear minimum mean squared error (LMMSE) estimator with Kronecker-structured covariance matrices is a representative method in this category \cite{shen2019channel, han2020channel}. Although LMMSE is optimal under Gaussian channel assumptions, the covariance matrix must be estimated from training data and may not capture the multipath structure accurately. Moreover, its computational cost scales as $O(K N_r^2 N_t^2)$ with $K$ subcarriers and $N_r$ receive antennas, which is expensive for large antenna arrays. The second category exploits channel sparsity through compressed sensing (CS) techniques. Joint orthogonal matching pursuit (OMP) over a three-dimensional dictionary of angle of arrival (AoA), angle of departure (AoD), and delay is a widely used approach \cite{gao2015spatially, meng2011compressive}. Joint OMP does not require offline training and adapts naturally to different environments. However, its complexity scales as $O(L G_\theta G_\phi G_\tau T)$ with the number of paths $L$, dictionary sizes $G_\theta, G_\phi, G_\tau$, and pilot length $T$. The three-dimensional search becomes the computational bottleneck, especially when fine grid resolution is needed. The third category employs deep learning for channel estimation \cite{he2018deep, mo2017channel}, which can achieve strong performance but requires large labeled datasets and may generalize poorly across different environments.
 
Recent advances in ray tracing (RT) and digital twin technologies have enabled accurate site-specific channel modeling at reasonable computational cost \cite{alkhateeb2023real, zeng2024tutorial}. A ray tracing simulation produces the full set of multipath parameters, including the AoA, AoD, delay, and complex gain of each propagation path. This path-level information constitutes a radio map (RM) that captures the fine-grained multipath geometry of the environment \cite{wang2024radiodiff, 11278649}. While radio maps have been widely used for coverage prediction and localization \cite{levie2021radiounet, zeng2021toward, wang2026tutorial}, their potential for aiding channel estimation remains underexplored. Prior works that leverage ray tracing for channel estimation typically use the simulated channel directly as an initial estimate or a training label for neural networks \cite{jiang2023digital, oh2004mimo}. In contrast, our approach extracts a structured angular-delay power spectrum (ADPS) prior that enables dimensionality reduction of the search space. The key insight is that AoA and delay are receiver-side parameters independent of the transmit beamforming, whereas the AoD is coupled with the pilot matrix. The ADPS derived from a path-level radio map reveals the joint AoA-delay support of all propagation paths. If this support is known, the estimator only needs to search for the AoD of each path, reducing the search from three dimensions to one.
 
In this paper, we propose CHARM, a path-level radio map-aided channel estimation framework for pilot-starved MIMO-OFDM systems. The ADPS is first constructed offline from ray tracing data, and the dominant multipath components are extracted through peak detection with sub-grid refinement. During online estimation, the receiver projects the observations onto the extracted AoA support, compensates for the path delays, and performs a one-dimensional AoD search for each path. A trust-region constraint is further introduced to prevent the sub-grid refinement from diverging under dictionary mismatch. The main contributions of this paper are summarized as follows.
 
\begin{itemize}
    \item We propose to extract an ADPS prior from path-level radio maps, which reduces channel estimation from a three-dimensional dictionary search to a one-dimensional AoD search. The online complexity is $O(L G_\phi T + L K N_r N_t)$, where the reconstruction term dominates for small $T$, making the runtime effectively independent of the pilot length.
    \item We introduce a trust-region constraint on sub-grid refinement that confines the refined estimates within one grid bin of the original peak. We prove that this constraint bounds the AoA estimation error regardless of the dictionary bias magnitude, reducing performance degradation from 8.2~dB to 3.7~dB at negligible computational cost.
    \item Comprehensive simulations are conducted under varying pilot lengths, SNR levels, and dictionary mismatch conditions. Results demonstrate that CHARM achieves accuracy comparable to joint OMP with $34.8\times$ speedup at $T \leq 4$, and that the trust-region variant degrades by only 3.7~dB under dictionary mismatch of 0.2~rad standard deviation, compared with 8.2~dB without the constraint.
\end{itemize}
 
The remainder of this paper is organized as follows. Section~\ref{sec:system} presents the system model. Section~\ref{sec:method} describes the proposed method. Section~\ref{sec:results} provides simulation results. Section~\ref{sec:conclusion} concludes the paper.
 
\section{System Model and Problem Formulation}
\label{sec:system}
 
\subsection{MIMO-OFDM Signal Model}
 
We consider a downlink MIMO-OFDM system with $N_t$ transmit antennas and $N_r$ receive antennas, both configured as uniform linear arrays (ULAs). The system operates over $K$ subcarriers with subcarrier spacing $\Delta f$. The frequency-domain channel matrix at the $k$-th subcarrier is denoted by $\mathbf{H}[k] \in \mathbb{C}^{N_r \times N_t}$, where $k = 0, 1, \ldots, K-1$. During the channel estimation phase, the transmitter sends $T$ pilot symbols over $T$ time slots. In the $t$-th time slot, the transmit beamforming vector is $\mathbf{x}_t \in \mathbb{C}^{N_t \times 1}$ with $\|\mathbf{x}_t\|^2 = 1$. The received signal at the $k$-th subcarrier in the $t$-th time slot is given by
\begin{equation}
    \mathbf{y}[t, k] = \mathbf{H}[k] \mathbf{x}_t + \mathbf{n}[t, k],
    \label{eq:received_signal}
\end{equation}
where $\mathbf{n}[t, k] \sim \mathcal{CN}(\mathbf{0}, \sigma^2 \mathbf{I}_{N_r})$ is additive white Gaussian noise. The pilot matrix is defined as $\mathbf{X} = [\mathbf{x}_1, \mathbf{x}_2, \ldots, \mathbf{x}_T] \in \mathbb{C}^{N_t \times T}$. In the pilot-starved regime where $T \ll N_t$, the direct least-squares solution $\hat{\mathbf{H}}[k] = \mathbf{Y}[k] \mathbf{X}^H (\mathbf{X} \mathbf{X}^H)^{-1}$ is infeasible due to the rank deficiency of $\mathbf{X} \mathbf{X}^H$.
 
\subsection{Geometric Channel Model}
 
We adopt a geometric channel model with $L$ propagation paths. Each path $l$ is characterized by its complex gain $\alpha_l$, AoA $\theta_l$, AoD $\phi_l$, and propagation delay $\tau_l$. The frequency-domain channel matrix at the $k$-th subcarrier is expressed as
\begin{equation}
    \mathbf{H}[k] = \sum_{l=1}^{L} \alpha_l \, e^{-j 2\pi k \Delta f \tau_l} \, \mathbf{a}_r(\theta_l) \, \mathbf{a}_t^H(\phi_l),
    \label{eq:channel_model}
\end{equation}
where $\mathbf{a}_r(\theta_l) \in \mathbb{C}^{N_r \times 1}$ and $\mathbf{a}_t(\phi_l) \in \mathbb{C}^{N_t \times 1}$ are the receive and transmit array steering vectors, respectively. For a ULA with half-wavelength spacing, the steering vector is given by
\begin{equation}
    \mathbf{a}(\theta) = \frac{1}{\sqrt{N}} \left[1, \, e^{j\pi \sin\theta}, \, \ldots, \, e^{j\pi (N-1) \sin\theta} \right]^T,
    \label{eq:steering_vector}
\end{equation}
where $N$ denotes the number of antennas. The channel in \eqref{eq:channel_model} is parameterized by $4L$ unknowns. Since the number of resolvable paths $L$ is typically small compared to the array dimension, the channel exhibits sparsity in the angle-delay domain. This sparsity has been widely exploited by CS methods to enable channel estimation with $T \ll N_t$.
 
\subsection{Problem Formulation}
 
Substituting \eqref{eq:channel_model} into \eqref{eq:received_signal}, the received signal can be rewritten as
\begin{equation}
    \mathbf{y}[t, k] = \sum_{l=1}^{L} \alpha_l \, e^{-j 2\pi k \Delta f \tau_l} \, (\mathbf{x}_t^H \mathbf{a}_t(\phi_l)) \, \mathbf{a}_r(\theta_l) + \mathbf{n}[t, k].
    \label{eq:received_expanded}
\end{equation}
Conventional CS approaches formulate channel estimation as sparse recovery over a three-dimensional dictionary of size $G_\theta \times G_\phi \times G_\tau$. With $4\times$ oversampled dictionaries where $G_\theta = 4 N_r$ and $G_\phi = 4 N_t$, the total dictionary size reaches $G_\theta \times G_\phi \times G_\tau \approx 4.2 \times 10^6$ atoms. Joint OMP iteratively selects atoms from this dictionary, resulting in a per-iteration complexity of $O(G_\theta G_\phi G_\tau T K)$. This large search space leads to high computational cost and motivates the development of methods that can reduce the effective search dimension. We evaluate estimation accuracy using the normalized mean squared error (NMSE):
\begin{equation}
    \text{NMSE} = \frac{\sum_{k=0}^{K-1} \|\hat{\mathbf{H}}[k] - \mathbf{H}[k]\|_F^2}{\sum_{k=0}^{K-1} \|\mathbf{H}[k]\|_F^2}.
    \label{eq:nmse}
\end{equation}
 
\section{Proposed Method}
\label{sec:method}
 
The proposed CHARM framework consists of an offline phase and an online phase, as illustrated in Fig.~\ref{fig:pipeline}. In the offline phase, an ADPS is constructed from a path-level radio map, and the dominant multipath components are extracted through peak detection with sub-grid refinement. In the online phase, the extracted AoA-delay support is used as a prior to reduce the channel estimation to a one-dimensional AoD search per path.
 
\begin{figure*}[t]
    \centering
    \includegraphics[width=1\textwidth]{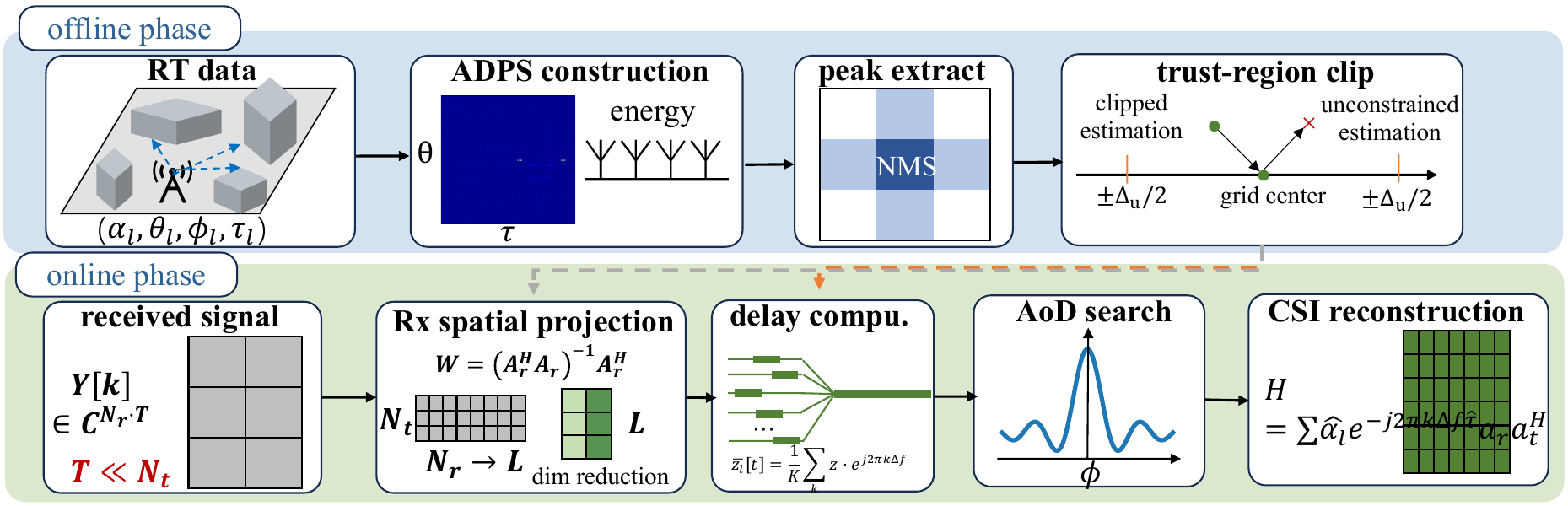}
    \caption{Overview of the proposed CHARM framework. The offline phase extracts AoA-delay support from a path-level radio map via ADPS construction, peak extraction, and trust-region constrained refinement. The online phase uses this support to perform receive-side spatial projection, delay compensation, and one-dimensional AoD search, reducing the three-dimensional dictionary search to a single dimension.}
    \label{fig:pipeline}
\end{figure*}
 
\subsection{Offline Phase: ADPS Construction and Peak Extraction}
 
\subsubsection{ADPS Construction}
 
Given a set of $L_{\text{rt}}$ ray tracing paths $\{(\alpha_l, \theta_l, \phi_l, \tau_l)\}_{l=1}^{L_{\text{rt}}}$, we construct the ADPS over a two-dimensional grid in the AoA-delay plane. The AoA axis is uniformly sampled in the $\sin\theta$ domain with an oversampling factor of 4, yielding $G_\theta = 4 N_r$ grid points. The delay axis consists of $G_\tau = K$ bins with resolution $1/(K \Delta f)$. The ADPS value at grid point $(\theta_i, \tau_j)$ is computed as
\begin{equation}
    P(\theta_i, \tau_j) = \sum_{l=1}^{L_{\text{rt}}} |\alpha_l|^2 \, \left|\mathbf{a}_r^H(\theta_i) \, \mathbf{a}_r(\theta_l)\right|^2 \, D_K^2(\tau_j - \tau_l),
    \label{eq:adps}
\end{equation}
where $D_K(\cdot)$ denotes the $K$-point Dirichlet kernel that captures the delay-domain response of the OFDM system. The ADPS encodes the AoA and delay information of all propagation paths into a two-dimensional map. Each peak in this map corresponds to a dominant multipath component, and its position indicates the AoA and delay of that component. The ADPS thus serves as a structured environmental prior that reveals the multipath support without requiring online pilot observations.
 
\subsubsection{Peak Extraction and Sub-Grid Refinement}
 
The dominant multipath components are identified by detecting local maxima on the ADPS grid. A $3 \times 3$ non-maximum suppression window is applied to locate peaks satisfying two conditions: the peak value exceeds a dynamic threshold set at 10~dB below the global maximum, and no neighboring grid point has a higher value. The number of extracted peaks is limited to $L_{\max} = \min(N_r, 16)$ to avoid overfitting. Each detected peak $l$ is recorded with its grid indices $(i_l, j_l)$, AoA $\theta_l^{\text{grid}}$, delay $\tau_l^{\text{grid}}$, and power $P_l$.
 
The grid-based peak positions are limited by the finite ADPS resolution. To achieve sub-grid accuracy, three-point parabolic interpolation is applied along both dimensions independently. For a peak at grid index $i$ with neighboring values $P_{i-1}$, $P_i$, and $P_{i+1}$, the fractional offset is
\begin{equation}
    \delta = \frac{1}{2} \cdot \frac{P_{i-1} - P_{i+1}}{P_{i-1} - 2P_i + P_{i+1}} \cdot \Delta,
    \label{eq:parabolic}
\end{equation}
where $\Delta$ is the grid spacing. The refined estimates are $\hat{\theta}_l = \theta_l^{\text{grid}} + \delta_\theta$ and $\hat{\tau}_l = \tau_l^{\text{grid}} + \delta_\tau$. This refinement recovers approximately 0.3 grid bins of sub-grid accuracy and contributes 4.6~dB of NMSE improvement, as shown in Section~\ref{sec:results}.
 
\subsection{Online Phase: Channel Estimation}
 
The online estimation uses the extracted support $\{(\hat{\theta}_l, \hat{\tau}_l)\}_{l=1}^{L}$ to reduce the problem to a one-dimensional AoD search. The key observation is that AoA and delay are receiver-side parameters that can be identified from the ADPS independently of the transmit beamforming, whereas the AoD is coupled with the pilot matrix $\mathbf{X}$ and must be estimated from the received signals.
 
\subsubsection{Receive-Side Spatial Projection}
 
The receive steering matrix is constructed as $\mathbf{A}_r = [\mathbf{a}_r(\hat{\theta}_1), \ldots, \mathbf{a}_r(\hat{\theta}_L)] \in \mathbb{C}^{N_r \times L}$. A spatial projection matrix separates the contributions from different paths. When the steering matrix is well-conditioned, the projection is $\mathbf{W} = (\mathbf{A}_r^H \mathbf{A}_r)^{-1} \mathbf{A}_r^H \in \mathbb{C}^{L \times N_r}$. When the condition number $\kappa(\mathbf{A}_r^H \mathbf{A}_r)$ exceeds 100, Tikhonov regularization is applied:
\begin{equation}
    \mathbf{W} = (\mathbf{A}_r^H \mathbf{A}_r + \lambda \, \text{diag}(1/P_l))^{-1} \mathbf{A}_r^H,
    \label{eq:projection_reg}
\end{equation}
where $P_l$ is the ADPS power of the $l$-th peak and $\lambda$ is a regularization parameter. The per-path observation at time slot $t$ and subcarrier $k$ is then $z[t, k, l] = \mathbf{w}_l^H \mathbf{y}[t, k]$, where $\mathbf{w}_l^H$ is the $l$-th row of $\mathbf{W}$.
 
\subsubsection{Delay Compensation and Subcarrier Averaging}
 
The per-path observations are coherently combined across subcarriers by compensating for the estimated delay:
\begin{equation}
    \bar{z}_l[t] = \frac{1}{K} \sum_{k=0}^{K-1} z[t, k, l] \cdot e^{+j 2\pi k \Delta f \hat{\tau}_l}.
    \label{eq:delay_comp}
\end{equation}
When the delay estimate is accurate, the signal components add coherently while the noise components add incoherently. Let $\sigma_z^2$ denote the noise variance after spatial projection. After averaging over $K$ subcarriers, the noise power reduces to $\sigma_z^2 / K$, yielding an effective SNR gain of $K = 128$. This gain is critical for the subsequent AoD search, enabling reliable correlation even with very few pilot symbols. The resulting matrix $\bar{\mathbf{Z}} \in \mathbb{C}^{L \times T}$ contains one row per path.
 
\subsubsection{One-Dimensional AoD Search and Channel Reconstruction}
 
For each path $l$, the AoD is estimated over a dictionary $\{\phi_g\}_{g=1}^{G_\phi}$ with $G_\phi = 4 N_t$. Let $\bar{\mathbf{z}}_l = [\bar{z}_l[1], \ldots, \bar{z}_l[T]]^T$ and $\mathbf{u}_g = [\mathbf{x}_1^H \mathbf{a}_t(\phi_g), \ldots, \mathbf{x}_T^H \mathbf{a}_t(\phi_g)]^T$. The AoD estimate and path gain are
\begin{equation}
    \hat{g}_l = \arg\max_{g} \frac{|\mathbf{u}_g^H \bar{\mathbf{z}}_l|^2}{\|\mathbf{u}_g\|^2}, \quad
    \hat{\alpha}_l = \frac{\mathbf{u}_{\hat{g}_l}^H \bar{\mathbf{z}}_l}{\|\mathbf{u}_{\hat{g}_l}\|^2}.
    \label{eq:aod_search}
\end{equation}
The channel is then reconstructed as
\begin{equation}
    \hat{\mathbf{H}}[k] = \sum_{l=1}^{L} \hat{\alpha}_l \, e^{-j 2\pi k \Delta f \hat{\tau}_l} \, \mathbf{a}_r(\hat{\theta}_l) \, \mathbf{a}_t^H(\hat{\phi}_l).
    \label{eq:reconstruction}
\end{equation}
 
\subsubsection{Complexity Analysis}
 
The total online complexity of CHARM is $O(L G_\phi T + L K N_r N_t)$, where the first term is the AoD search and the second is channel reconstruction. For small $T$, reconstruction dominates: with $L = 4$, $G_\phi = 256$, $T = 4$, the search cost is $4{,}096$ operations versus $1{,}048{,}576$ for reconstruction. The runtime is therefore effectively independent of $T$. In contrast, joint OMP requires $O(L_t G_\theta G_\phi G_\tau T K)$ per iteration, scaling linearly with the full three-dimensional dictionary size and the pilot length.
 
\subsection{Trust-Region Constraint}
 
In practice, the radio map may contain systematic biases due to imperfect environment modeling or outdated geometric information. Under such dictionary mismatch, the parabolic interpolation can push refined estimates far from the true values, causing up to 8.2~dB NMSE degradation. To address this, we introduce a trust-region constraint that clips the refined estimates to the vicinity of the original grid bin:
\begin{equation}
    \sin(\hat{\theta}_l) = \text{clip}\!\left(\sin(\hat{\theta}_l^{\text{ref}}), \; \sin(\theta_i^{\text{grid}}) \pm \frac{\Delta_u}{2}\right),
    \label{eq:trust_aoa}
\end{equation}
\begin{equation}
    \hat{\tau}_l = \text{clip}\!\left(\hat{\tau}_l^{\text{ref}}, \; \tau_j^{\text{grid}} \pm \frac{\Delta_\tau}{2}\right),
    \label{eq:trust_delay}
\end{equation}
where $\Delta_u$ and $\Delta_\tau$ are the grid spacings and $\text{clip}(x, a \pm b)$ constrains $x$ to $[a-b, a+b]$. We refer to this variant as CHARM~(w/~trust). The following proposition formalizes the error bound.
 
\begin{proposition}
\label{prop:trust}
Let $u_l^{\mathrm{true}} = \sin(\theta_l^{\mathrm{true}})$ and $\epsilon_{\mathrm{RT}} = |u_l^{\mathrm{grid}} - u_l^{\mathrm{true}}|$ denote the radio map bias. Under the trust-region constraint in \eqref{eq:trust_aoa}, the estimation error satisfies
\begin{equation}
    |\hat{u}_l - u_l^{\mathrm{true}}| \leq \frac{\Delta_u}{2} + \epsilon_{\mathrm{RT}}.
    \label{eq:error_bound}
\end{equation}
\end{proposition}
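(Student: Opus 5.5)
The bound follows from a single application of the triangle inequality, inserting the grid point $u_l^{\mathrm{grid}} = \sin(\theta_i^{\mathrm{grid}})$ between the estimate and the truth:
\begin{equation*}
|\hat{u}_l - u_l^{\mathrm{true}}| \le |\hat{u}_l - u_l^{\mathrm{grid}}| + |u_l^{\mathrm{grid}} - u_l^{\mathrm{true}}|.
\end{equation*}
The second term is exactly $\epsilon_{\mathrm{RT}}$ by the definition in Proposition~\ref{prop:trust}. It then only remains to bound the first term by $\Delta_u/2$.

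For the first term, I would use the definition of the clipping operator in \eqref{eq:trust_aoa}. By construction, $\mathrm{clip}(x, a\pm b)$ returns a value in $[a-b,a+b]$ for every real input $x$. This holds whatever the value of the parabolic refinement $\sin(\hat\theta_l^{\mathrm{ref}})$ produced by \eqref{eq:parabolic}, including the degenerate case where the denominator $P_{i-1}-2P_i+P_{i+1}$ is near zero and the offset $\delta$ blows up. Hence $\hat u_l = \sin(\hat\theta_l) \in [u_l^{\mathrm{grid}} - \Delta_u/2,\, u_l^{\mathrm{grid}} + \Delta_u/2]$, so $|\hat u_l - u_l^{\mathrm{grid}}| \le \Delta_u/2$. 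Substituting into the triangle inequality gives \eqref{eq:error_bound}.

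The main point needing care is notational rather than analytic. The clip center in \eqref{eq:trust_aoa} is $\sin(\theta_i^{\mathrm{grid}})$, the grid point of the detected ADPS peak. The bias $\epsilon_{\mathrm{RT}}$ is defined through $u_l^{\mathrm{grid}}$. I would state explicitly that these are the same quantity, $u_l^{\mathrm{grid}} := \sin(\theta_{i_l}^{\mathrm{grid}})$ with $i_l$ the peak index recorded during peak extraction. With that identification, the bias term measures how far the peak's grid point lies from the true direction, absorbing both radio-map modelling error and grid quantization.

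I would also remark on what the argument shows. The bound is independent of the refinement step, which is why it holds regardless of the dictionary bias magnitude. Without the clip, $|\hat u_l - u_l^{\mathrm{grid}}|$ equals $|\delta_\theta|$ from \eqref{eq:parabolic}, which has no a priori bound. The identical argument applied to \eqref{eq:trust_delay} gives the analogous delay bound $|\hat\tau_l - \tau_l^{\mathrm{true}}| \le \Delta_\tau/2 + |\tau_l^{\mathrm{grid}} - \tau_l^{\mathrm{true}}|$.
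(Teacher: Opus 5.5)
Your proposal is correct and follows essentially the same route as the paper's proof: insert $u_l^{\mathrm{grid}}$ via the triangle inequality, bound $|\hat{u}_l - u_l^{\mathrm{grid}}| \le \Delta_u/2$ directly from the clipping, and identify the second term as $\epsilon_{\mathrm{RT}}$. Your explicit identification $u_l^{\mathrm{grid}} = \sin(\theta_{i_l}^{\mathrm{grid}})$ makes precise a step the paper leaves implicit, and your delay-domain analogue matches the paper's remark.
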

 
\begin{IEEEproof}
The clipping ensures $|\hat{u}_l - u_l^{\mathrm{grid}}| \leq \Delta_u / 2$. By the triangle inequality, $|\hat{u}_l - u_l^{\mathrm{true}}| \leq |\hat{u}_l - u_l^{\mathrm{grid}}| + |u_l^{\mathrm{grid}} - u_l^{\mathrm{true}}| \leq \Delta_u / 2 + \epsilon_{\mathrm{RT}}$.
\end{IEEEproof}
 
\begin{remark}
The trust-region constraint has two key properties. Under zero dictionary bias where $\epsilon_{\mathrm{RT}} = 0$, the parabolic refinement output naturally falls within the trust region, so the clipping does not activate and CHARM~(w/~trust) reduces to standard CHARM with no accuracy loss. Under large bias, the error is bounded by $\Delta_u / 2 + \epsilon_{\mathrm{RT}}$, regardless of how far the unconstrained refinement would have diverged. The same bound applies to the delay dimension with $\Delta_\tau$ replacing $\Delta_u$. The computational overhead is negligible, involving only element-wise comparisons.
\end{remark}
 
The complete procedure is summarized in Algorithm~\ref{alg:proposed}.
 
\begin{algorithm}[t]
\caption{CHARM: Proposed Channel Estimation}
\label{alg:proposed}
\begin{algorithmic}[1]
\Require Received signals $\{\mathbf{y}[t,k]\}$, pilot matrix $\mathbf{X}$, path-level radio map
\Ensure Estimated channel $\{\hat{\mathbf{H}}[k]\}_{k=0}^{K-1}$
\Statex \textbf{Offline Phase:}
\State Construct ADPS $P(\theta_i, \tau_j)$ from radio map via \eqref{eq:adps}
\State Extract peaks via $3 \times 3$ non-maximum suppression
\State Refine peaks via parabolic interpolation \eqref{eq:parabolic}
\State \textit{(Optional)} Apply trust-region via \eqref{eq:trust_aoa}--\eqref{eq:trust_delay}
\Statex \textbf{Online Phase:}
\State Construct $\mathbf{A}_r$ and compute projection $\mathbf{W}$ via \eqref{eq:projection_reg}
\For{each path $l = 1, \ldots, L$}
    \State Project: $z[t,k,l] = \mathbf{w}_l^H \mathbf{y}[t,k]$
    \State Delay compensate and average via \eqref{eq:delay_comp}
    \State 1D AoD search and gain estimation via \eqref{eq:aod_search}
\EndFor
\State Reconstruct channel via \eqref{eq:reconstruction}
\end{algorithmic}
\end{algorithm}
 
\section{Simulation Results}
\label{sec:results}
 
\subsection{Simulation Setup}
 
We consider a system with $N_t = 64$, $N_r = 32$, $K = 128$, $\Delta f = 120$~kHz, and carrier frequency 2.0~GHz, resulting in a total bandwidth of 15.36~MHz. The dictionaries have $G_\theta = 128$, $G_\phi = 256$, and $G_\tau = 128$. Channel data are generated by a ray tracing simulator that produces site-specific multipath parameters for 24 receiver locations. For each location, 8 independent trials are conducted with different noise and pilot realizations, yielding 192 paired samples per condition. Pilot vectors are drawn from the columns of a discrete Fourier transform matrix. The default operating point is $T = 4$ and $\text{SNR} = 20$~dB. CHARM is compared against Joint OMP-3D, LMMSE-Kron, Kron-OMP, and the ablation variant CHARM~(w/o~refine) that skips the parabolic interpolation step.
 
\subsection{NMSE versus Pilot Length}
 
Fig.~\ref{fig:nmse_vs_T} shows the NMSE as a function of $T$ at $\text{SNR} = 20$~dB. CHARM achieves $-12.66$~dB at $T = 4$, comparable to Joint OMP-3D at $-12.48$~dB, with only 0.18~dB difference. At $T = 2$, CHARM outperforms Joint OMP-3D by 0.9~dB ($-3.55$ vs $-2.65$~dB), demonstrating a clear advantage in the most pilot-starved condition. This advantage arises because the ADPS prior provides a strong initialization that is most beneficial when the number of pilot observations is extremely limited. The ablation variant CHARM~(w/o~refine) yields $-8.03$~dB at $T = 4$, confirming a 4.6~dB contribution from sub-grid refinement. LMMSE-Kron and Kron-OMP produce positive NMSE values across all pilot lengths, indicating that methods relying on statistical priors or per-subcarrier processing completely fail in the pilot-starved regime. When $T \geq 5$, Joint OMP-3D begins to surpass CHARM because additional pilots enable the three-dimensional search to resolve paths that the ADPS peak extraction may merge or miss. This limitation motivates the residual path recovery mechanism in our ongoing work.
 
\begin{figure}[t]
    \centering
    \includegraphics[width=0.85\linewidth]{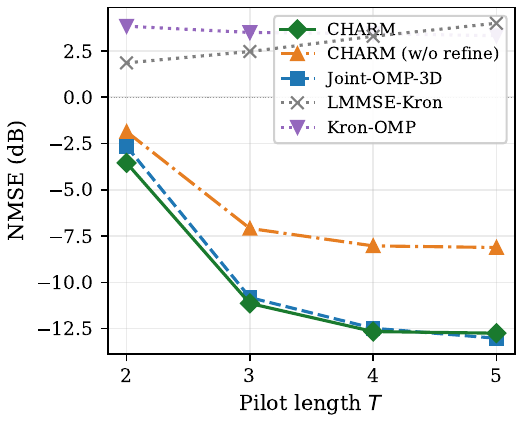}
    \caption{NMSE versus pilot length $T$ at $\text{SNR} = 20$~dB.}
    \label{fig:nmse_vs_T}
\end{figure}
 
\subsection{NMSE versus SNR}
 
Fig.~\ref{fig:nmse_vs_snr} shows the NMSE versus SNR at $T = 4$. Below $-5$~dB, all CS-based and ADPS-based methods converge because noise dominates the estimation error. Above 10~dB, CHARM saturates at $-12.65$~dB, comparable to Joint OMP-3D at $-12.52$~dB, with both methods limited by dictionary resolution and path modeling errors rather than noise. CHARM~(w/o~refine) saturates at approximately $-8.0$~dB, showing that the sub-grid refinement gain is consistent across all SNR levels. LMMSE-Kron and Kron-OMP remain at positive NMSE across the entire range, with Kron-OMP reaching $+14.4$~dB at $-15$~dB due to severe noise enhancement in the per-subcarrier processing. These results confirm that the ADPS-based framework maintains its advantage consistently without sacrificing low-SNR robustness.
 
\begin{figure}[t]
    \centering
    \includegraphics[width=0.85\linewidth]{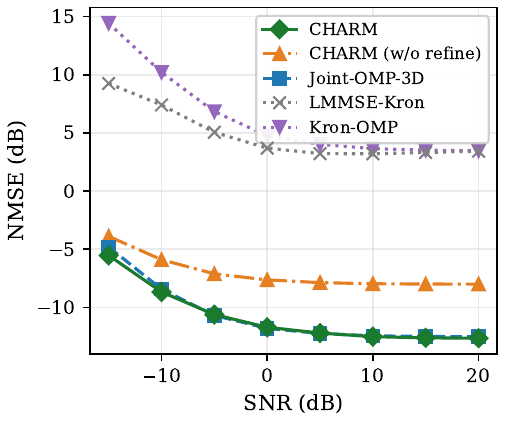}
    \caption{NMSE versus SNR at $T = 4$.}
    \label{fig:nmse_vs_snr}
\end{figure}
 
\subsection{Robustness to Dictionary Mismatch}
 
Fig.~\ref{fig:robustness} evaluates robustness under dictionary bias at $T = 4$, $\text{SNR} = 20$~dB. The bias is modeled as a random perturbation in the $\sin\theta$ domain with standard deviation from 0 to 0.2~rad. Without the trust-region constraint, CHARM degrades from $-12.46$~dB to $-4.28$~dB, a total of 8.2~dB. With the trust-region constraint, CHARM~(w/~trust) limits the degradation to 3.7~dB, maintaining $-8.80$~dB at the most severe mismatch. This $4.5$~dB improvement directly validates Proposition~\ref{prop:trust}, which guarantees that the clipping bounds the worst-case error. Joint OMP-3D maintains a constant $-12.55$~dB because it does not use the ADPS prior and is therefore immune to this type of bias. CHARM~(w/o~refine) also remains flat because it skips the parabolic interpolation entirely, though at the cost of 4.6~dB lower accuracy at zero bias. The trust-region curve saturates at bias~$\approx 0.03$~rad, after which further increases in bias do not cause additional degradation, confirming the bounded-error property.
 
\begin{figure}[t]
    \centering
    \includegraphics[width=0.85\linewidth]{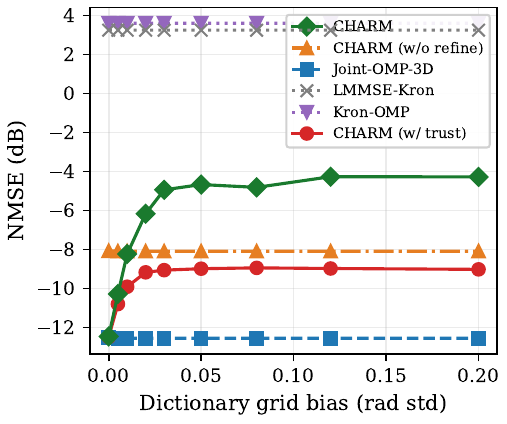}
    \caption{Robustness to dictionary mismatch at $T = 4$, $\text{SNR} = 20$~dB.}
    \label{fig:robustness}
\end{figure}
 
\subsection{Runtime and Overall Comparison}
 
Fig.~\ref{fig:runtime} shows the runtime versus $T$ on a logarithmic scale. CHARM and CHARM~(w/o~refine) remain stable at approximately 10~ms regardless of $T$, experimentally confirming that the channel reconstruction cost dominates as analyzed in Section~\ref{sec:method}. In contrast, Joint OMP-3D increases from approximately 200~ms at $T = 2$ to over 400~ms at $T = 5$, reflecting its linear dependence on $T$. Kron-OMP exceeds 1,700~ms, making it impractical for latency-sensitive applications. Table~\ref{tab:comparison} summarizes the performance at $T = 4$, $\text{SNR} = 20$~dB. CHARM achieves $-12.66$~dB in just 10.2~ms, yielding a $34.8\times$ speedup over Joint OMP-3D at comparable accuracy. CHARM~(w/o~refine) runs at similar speed but sacrifices 4.6~dB, highlighting the value of sub-grid refinement. LMMSE-Kron and Kron-OMP are not competitive in either accuracy or speed, confirming that statistical priors and per-subcarrier processing are inadequate for the pilot-starved regime.
 
\begin{figure}[t]
    \centering
    \includegraphics[width=0.85\linewidth]{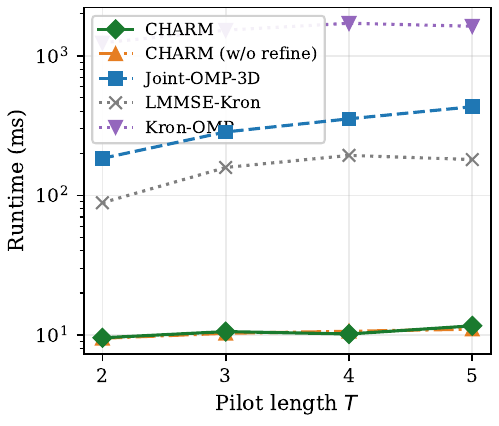}
    \caption{Runtime versus pilot length $T$.}
    \label{fig:runtime}
\end{figure}
 
\begin{table}[t]
\centering
\caption{Performance comparison at $T=4$, SNR$=20$\,dB.}
\label{tab:comparison}
\resizebox{0.98\linewidth}{!}{
\begin{tabular}{lccc}
\toprule
Method & NMSE (dB) & Runtime (ms) & Speedup \\
\midrule
CHARM & $-12.66$ & 10.2 & $34.8\times$ \\
CHARM (w/o refine) & $-8.03$ & 10.6 & $33.3\times$ \\
Joint-OMP-3D & $-12.48$ & 353.5 & $1.0\times$ \\
LMMSE-Kron & $+3.30$ & 193.3 & $1.8\times$ \\
Kron-OMP & $+3.44$ & 1711.1 & $0.2\times$ \\
\bottomrule
\end{tabular}
}
\end{table}
 
\section{Conclusion}
\label{sec:conclusion}
 
We have proposed CHARM, a path-level radio map-aided channel estimation framework that reduces the three-dimensional dictionary search to a one-dimensional AoD search by exploiting an ADPS prior. We have further introduced a trust-region constraint that bounds the sub-grid refinement error under dictionary mismatch, with a formal guarantee on the worst-case estimation error. Simulation results confirm that CHARM achieves $34.8\times$ speedup over joint OMP with comparable accuracy and that the trust-region variant limits mismatch degradation to 3.7~dB versus 8.2~dB without the constraint. The proposed framework enables fast and reliable channel acquisition in pilot-starved MIMO-OFDM systems, which is beneficial for latency-sensitive communication scenarios. Future work will extend the framework with residual path recovery to further improve accuracy when sufficient pilot symbols are available.

\bibliography{ref}
\bibliographystyle{IEEEtran}
\ifCLASSOPTIONcaptionsoff
  \newpage
\fi
\end{document}